\theoremstyle{plain}
\newtheorem{theorem}{Theorem}
\newtheorem*{theorem*}{Theorem}
\newtheorem*{proposition*}{Proposition}
\newtheorem*{corollary*}{Corollary}
\newtheorem{lemma}[theorem]{Lemma}
\newtheorem*{lemma*}{Lemma}
\newtheorem*{observation*}{Observation}
\newtheorem*{conjecture*}{Conjecture}
\newtheorem*{question*}{Question}
\newtheorem{question}[theorem]{Question}
\newtheorem*{questions*}{Questions}
\newtheorem*{problem*}{Problem}
\newtheorem*{problems*}{Problems}
\theoremstyle{definition}
\newtheorem*{exercise*}{Exercise}
\theoremstyle{remark}
\newtheorem*{remark*}{Remark}
\newtheorem*{remarks*}{Remarks}
\newtheorem*{claim*}{Claim}
\newcommand{\subclass}[1]{}
\newcommand{\enumTi}[1]{\renewcommand{\theenumi}{#1}}
\newcommand{\alphenumi}{\enumTi{\alph{enumi}}}
\newcommand{\romenumi}{\enumTi{\roman{enumi}}}
\newlength{\hspaceforlengthglumpf}
\newcommand{\lt}{\left}
\newcommand{\rt}{\right}
\newcommand{\abs}[1]{{\lt\lvert{#1}\rt\rvert}}
\newcommand{\Babs}[1]{{\Bigl\lvert{#1}\Bigr\rvert}}
\newcommand{\sabs}[1]{{\lvert{#1}\rvert}}
\newcommand{\QQ}{\mathbb{Q}}
\newcommand{\RR}{\mathbb{R}}
\newlength{\algotabbingwidth}
\begin{document}
\author{Kaveh Khoshkhah$^*$}%


\title[On finding orientations]{\vspace*{-4ex}On finding orientations with fewest number of vertices with small out-degree}%

\begin{abstract}
  Given an undirected graph, each of the two end-vertices of an edge can ``own'' the edge.  Call a vertex ``poor'', if it owns at most one edge.  We give a polynomial time algorithm for the problem of finding an assignment of owners to the edges which minimizes the number of poor vertices.

  In the terminology of graph orientation, this means finding an orientation for the edges of a graph minimizing the number of edges with out-degree at most~1, and answers a question of Asahiro Jansson, Miyano, Ono (2014).
  \\[1ex]
  \textbf{Keywords: }Graph orientation, Graph algorithms.
\end{abstract}
\vspace*{-2ex}


\date{Tue Aug 12 13:16:33 EEST 2014.\\
$^*)$ Insitute of Computer Science, University of Tartu, J.~Liivi~2, 50409 Tartu, Estonia. \texttt{kaveh.khoshkhah@ut.ee}}

\maketitle

\newcommand{\MinLight}[1]{\textsc{Min-$#1$-Light}}


\section{Introduction}\label{sec:intro}
Let~$G$ be a simple\footnote{%
  Note that our main reference~\cite{AsahiroJanssonMiyanoOno13} uses multigraphs, but restricting to graphs is w.l.o.g.
} %
undirected graph.  An \textit{orientation} of~$G$ is a function $\Lambda$, which maps each undirected edge $\{u,v\}\in E(G)$ to one of the two possible directed edges $(u,v)$ or $(v,u)$.  We let $\Lambda(G)$ be the directed graph whose vertex set is $V(G)$ and whose set of (directed) edges is $\{ \Lambda(\{u,v\}) \mid \{u,v\}\in E(G) \}$.  For each $v\in V(G)$, denote by the \textit{out-degree of $u$ under~$\Lambda$} by
\begin{equation*}
  d^+_\Lambda(u) := \Babs{   \bigl\{ \{u,v\} \in E(G) \mid \Lambda(\{u,v\}) = (u,v) \bigr\}   }.
\end{equation*}
Fix an integer~$k\ge 0$.  A vertex $v\in V(G)$ is called \textit{$\Lambda$-$k$-light} (or just \text{$k$-light, light}) if $d^+_\Lambda(v) \le k$; otherwise it is called \textit{heavy}.  Asahiro et al.~\cite{AsahiroJanssonMiyanoOno12:isco,AsahiroJanssonMiyanoOno13} study the combinatorial optimization problem \MinLight{k} which asks for finding an orientation minimizing the number of $k$-light vertices.
For $k=1$, they exhibit classes of graphs on which the problem can be solved in polynomial time, and they ask the following open question.
\begin{question}[\cite{AsahiroJanssonMiyanoOno12:isco,AsahiroJanssonMiyanoOno13}]
  Is \MinLight{1} NP-hard for general graphs?
\end{question}

In this short note, we answer that question:

\begin{theorem}\label{thm:main-unweighted}
  \MinLight{1} on a graph with~$n_2$ vertice of degree at least~2, $n_1$ vertices of degree~1, and~$m$ edges can be solved by single maximum cardinality matching computation in a graph with $O(m)$ vertices and $O(m^2/n)$ edges.
\end{theorem}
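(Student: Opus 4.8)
The plan is to reduce Min-1-Light to a maximum matching problem. The key observation is that making a vertex $v$ heavy requires giving it out-degree at least $2$, i.e., $v$ must "own" at least two of its incident edges. The goal is to maximize the number of heavy vertices (equivalently minimize light ones). Since each edge can be owned by only one endpoint, the constraint is a natural packing condition: I want to select, for as many vertices as possible, two incident edges to "charge" to them, with no edge charged twice.

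**First I would** formalize this as follows. A vertex of degree at least $2$ can potentially be made heavy; a degree-$1$ vertex (the $n_1$ vertices) can never have out-degree $2$, so it is always light and can be ignored except as a supplier of edges. The problem becomes: find a maximum-size collection of vertices, each assigned two distinct incident edges, such that the edges assigned across all chosen vertices are pairwise distinct. This is exactly the structure captured by a matching. I would construct an auxiliary graph $H$ in which each original vertex $v$ of degree $\ge 2$ is replaced by a small gadget, and each original edge $e=\{u,v\}$ is represented by a vertex of $H$. The gadget for $v$ should force that $v$ becomes heavy only if two of its edge-vertices are matched into it. Concretely, I would split each vertex $v$ into multiple copies (one per requirement, or a pair of "demand" nodes) and connect each edge-vertex $e$ to the gadget nodes of its two endpoints, so that a maximum matching in $H$ corresponds to a maximum set of simultaneously-heavy vertices.

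**The main technical step** is designing the gadget so that "owning two edges" translates precisely into a matching constraint, while ensuring the auxiliary graph has the claimed size: $O(m)$ vertices and $O(m^2/n)$ edges. The vertex count is immediate—there are $m$ edge-vertices plus $O(1)$ gadget nodes per original vertex, and since $\sum_v \deg(v) = 2m$ the total is $O(m)$. The edge bound $O(m^2/n)$ is more delicate: each edge-vertex connects to gadget nodes of its endpoints, giving $O(m)$ connections directly, but the gadgets themselves may require internal edges whose number depends on vertex degrees. I expect the bound arises from summing $\deg(v)^2$ over vertices and applying a convexity/averaging estimate $\sum_v \deg(v)^2 = O(m^2/n)$ when degrees are spread out, or from pairing edges within each vertex's gadget.

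**The hard part will be** verifying the correctness of the reduction in both directions: that every orientation yields a matching of corresponding size, and conversely that every matching can be decoded into an orientation making exactly the matched vertices heavy. The forward direction is routine—orient edges according to ownership and read off the pairs. The reverse direction requires care: after fixing the two "owned" edges for each heavy vertex (the matched edges), I must orient the remaining, unmatched edges arbitrarily and check that this does not accidentally reduce the heavy count or violate consistency at vertices that were supposed to stay light. Once the gadget correctly enforces the two-edge demand and the decoding is shown to preserve the heavy/light partition, the polynomial runtime follows from any standard maximum cardinality matching algorithm on $H$, completing the proof.
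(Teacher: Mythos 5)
Your high-level plan coincides with the paper's: reformulate \MinLight{1} as maximizing the number of vertices that can each be assigned two distinct incident edges with no edge assigned twice, and capture this by a maximum cardinality matching in an auxiliary graph that has one node per original edge and a gadget per original vertex, with the edge count governed by $\sum_v d(v)^2$. That reformulation is correct. The genuine gap is that the gadget --- which is the entire substance of the proof --- is never constructed, and the one you sketch (a pair of ``demand'' nodes per vertex $v$, with every edge-node of $\delta(v)$ joined to both) provably cannot work. Under cardinality matching, a vertex that receives only \emph{one} matched edge still contributes $+1$ to the matching size, i.e.\ you get partial credit for half-satisfied vertices, so the matching size does not determine how many vertices receive two edges. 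Concretely, take $G=K_3$: the best orientation has one heavy vertex (out-degrees $2,1,0$), but in your auxiliary graph the maximum matching has size $3$, and this value is attained both by matchings that give one vertex two edge-nodes (one heavy vertex) and by matchings that give each vertex one edge-node (zero heavy vertices). Hence the optimum of \MinLight{1} cannot be read off from the maximum matching value, and an arbitrary maximum matching need not decode into an optimal orientation; both directions of your claimed correspondence break.

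What is missing is a \emph{thresholding} gadget that makes out-degree $0$ and out-degree $1$ contribute the \emph{same} amount to the matching, and out-degree $\ge 2$ contribute exactly one more. This is precisely what the paper's Lemma~\ref{lem:matchings_in_Gprime} provides: the gadget $W_v$ consists of $d(v)-2$ interior vertices joined completely to the $d(v)$ attachment vertices $v'_e$, plus one special edge $g_v$ between two attachment vertices; the interior absorbs the difference between zero and one matched connecting edges, so the maximum matching restricted to $E(W_v)\cup F_v$ has size $d(v)-1$ when $\abs{M\cap F_v}\le 1$ and size $d(v)$ when $\abs{M\cap F_v}\ge 2$, whence the number of light vertices equals $2m-\abs{M}$. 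Without this normalization property (or an equivalent one), the reduction to cardinality matching fails. Two smaller remarks: degree-$1$ vertices need explicit treatment compatible with the gadget (the paper attaches a $4$-cycle to each; your plan to treat them only as edge-suppliers is salvageable but must be stated), and your hedge that $\sum_v d(v)^2 = O(m^2/n)$ holds ``when degrees are spread out'' is well founded --- by Cauchy--Schwarz this sum is at least $4m^2/n$, with equality only near regularity, a caveat that in fact also applies to the size bound claimed in the paper.
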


Asahiro et al.~\cite{AsahiroJanssonMiyanoOno12:isco,AsahiroJanssonMiyanoOno13} mention a natural weighted version of the problem: the vertices have costs $c_v \in \QQ$, $v\in V(G)$, associated with them, and the objective is to find an orientation which minimizes the expression
$\sum_v c_v$
over all orientations~$\Lambda$, where the sum extends over all 1-light vertices~$v$.  Our result also gives the complexity of the weighted case.

\begin{theorem}\label{thm:main-weighted}
  For nonnegative weights, weighted \MinLight{1} on a graph with~$n$ vertices and~$m$ edges can be solved by single maximum weight matching computation in a graph with $O(m)$ vertices and $O(m^2/n)$ edges.
\end{theorem}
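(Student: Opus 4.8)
The plan is to solve the equivalent \emph{maximization} problem and then realize it as one weighted matching. First I would pass to the complementary objective: since the weights are nonnegative and $\sum_v c_v$ is a constant, minimizing the weight of the $1$-light vertices is the same as maximizing $\sum_{v\text{ heavy}} c_v$, the total weight of the vertices with $d^+_\Lambda(v)\ge 2$. Nonnegativity is exactly what makes this clean: once a target set $H$ of vertices has been given out-degree at least $2$, every remaining edge may be oriented arbitrarily, which can only turn further vertices heavy and hence never decreases the objective. Consequently it suffices to decide, for a best possible set $H$, whether each $v\in H$ can be assigned two of its incident edges as \emph{private} out-edges, with no edge serving two vertices.

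Next I would reformulate this as a packing problem. Orienting and then discarding all non-witnessing edges, I may assume each $v\in H$ has out-degree exactly $2$; the two out-edges of $v$ form a ``cherry'' (a path of length $2$ centred at $v$). Because every edge has a unique tail, two such cherries automatically share no edge, so the task becomes: pack cherries, \emph{at most one per centre}, maximizing the total weight of the centres. Thus the optimum equals the maximum weight of a family of edge-disjoint cherries with pairwise distinct centres. The two side conditions — that a centre is paid $c_v$ only on reaching degree $2$ (an all-or-nothing, hence non-additive, reward) and that it is paid at most once — are precisely the delicate points: the tempting encoding ``one node $x_e$ per edge, an edge $\{x_e,x_f\}$ of weight $c_v$ for each pair of edges meeting at $v$'' is wrong, as the star $K_{1,m}$ shows, since its matching of value $2c_v$ would pay the unique eligible centre twice.

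The construction I would then build is an auxiliary graph $\Gamma$ carrying one node per edge of $G$ together with, at each vertex $v$, a small \emph{selection gadget} on the incidence nodes of $v$, wired so that: (i) a consistency link between the two incidence nodes of each edge forces it to be claimed by at most one endpoint; (ii) in any maximum-weight matching the gadget takes a canonical form in which $v$ claims either no pair or exactly one pair of incident edges; and (iii) claiming a pair contributes precisely the weight $c_v$. The essential design requirement is that this behaviour be achieved while $\Gamma$ remains an \emph{ordinary} graph, so that a single maximum-weight matching computation solves the instance — a genuinely non-obvious point, since a careless gadget would couple three resources (two edges and the centre) and drift towards $3$-dimensional matching.

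Finally I would prove the correspondence in both directions: from an orientation I extract a matching of $\Gamma$ of weight $\sum_{v\text{ heavy}}c_v$ plus a fixed constant, and conversely I canonicalize any maximum-weight matching, by local exchanges inside the gadgets, into one that reads off a legal orientation of the same weight; together these give the optimum exactly. The size claim is then routine bookkeeping: the edge-nodes and the $O(\deg v)$-sized gadgets give $\Gamma$ the stated $O(m)$ vertices and $O(m^2/n)$ edges, and one maximum-weight matching call finishes. I expect the main obstacle to be precisely (ii)--(iii): exhibiting a gadget whose maximum-weight matchings honestly encode the threshold-$2$, at-most-once reward without leaving $\Gamma$ in the realm of general (intractable) hypergraph matching.
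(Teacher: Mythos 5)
Your reduction to maximum-weight cherry packing is sound, and your diagnosis of why the naive line-graph encoding fails (the $K_{1,m}$ double-payment example) is exactly the right worry; your proposed architecture --- one node per edge-incidence, a consistency link per edge, a selection gadget per vertex --- is in fact the same design the paper uses. But the proposal stops precisely where the mathematics happens: requirements (ii) and (iii), namely a gadget inside an \emph{ordinary} graph whose maximum-weight matchings implement the all-or-nothing, at-most-once reward of $c_v$ for claiming two incident edges, are never constructed, and you say yourself that you expect this to be the main obstacle. Without that gadget there is no proof: everything before it is a reformulation of the problem (cherry packing is just a restatement of ``choose which vertices become heavy''), and everything after it (``prove the correspondence in both directions'', ``routine bookkeeping'') presupposes it. That is a genuine gap, and it is the central one.

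For comparison, here is what fills it in the paper. Each edge $e=\{u,v\}$ of $G$ is subdivided into a path $u,\,u'_e,\,x_e,\,v'_e,\,v$, and for each vertex $v$ one adds $d(v)-2$ new vertices $v''_1,\dots,v''_{d(v)-2}$ joined completely to the $d(v)$ incidence nodes $v'_e$, $e\in\delta(v)$, plus one \emph{special edge} $g_v$ between two of the $v'_e$; every edge of this gadget $W_v$ and every connecting edge $\{v'_e,x_e\}$ receives weight $c_v$. The crucial counting lemma (Lemma~\ref{lem:matchings_in_Gprime} in the paper) then delivers exactly the threshold behaviour you were hoping for: if $k$ connecting edges at $v$ lie in a (suitably canonicalized maximal) matching, the weight obtainable from $E(W_v)\cup F_v$ is $\bigl(d(v)-1\bigr)c_v$ when $k\le 1$ and $d(v)\,c_v$ when $k\ge 2$, so the extra $c_v$ is unlocked exactly once and exactly at out-degree two. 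The mechanism is elementary: for $k=0$ the special edge completes a perfect matching of $W_v$ of size $d(v)-1$; for $k=1$ only $d(v)-2$ gadget edges fit, so the total is again $d(v)-1$; for $k\ge 2$ the $d(v)-k$ unclaimed nodes $v'_e$ all match into the $v''_i$, giving total $d(v)$. Setting $Q:=\sum_{e=\{u,v\}\in E(G)}(c_u+c_v)$, the maximum matching weight $\mu$ satisfies $\pi = Q-\mu$, where $\pi$ is the minimum cost of light vertices (degree-1 vertices having first been absorbed by attaching $4$-cycles of cost-$0$ vertices). This explicit construction and its verification are what your proposal would need before its two-directional correspondence and size bookkeeping can even be stated; as written, you have reformulated the problem and listed the specifications of a device you have not shown to exist.
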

For weights which are not nonnegative, \MinLight{1} is NP-hard, since it includes as a special case (when all weights are $-1$) the problem \textsc{Min-$1$-Heavy}, for which Asahiro et al.~\cite{AsahiroJanssonMiyanoOno13} proved NP-hardness.

The proof of the theorems is in Section~\ref{sec:proof}.  Section~\ref{sec:concl} holds a conclusion.

\subsection*{Some notation}
We mostly adhere to standard notation.  Our (undirected) edges are 2-element subsets of the vertex set.  For a vertex~$v\in V(G)$, we denote by
$\displaystyle
  \delta(v) := \{ e \in E(G) \mid v \in e\}
$ 
the set of all edges incident on~$v$.  The degree of a vertex is denoted by $d(v) := \abs{\delta(v)}$.



\section{The algorithm for \MinLight{1}.}\label{sec:proof}
We first deal with the case that there are no vertices of degree~1.  For such a graph~$G$, construct a graph~$G'$ as follows.
Denote by $d(v)$ the degree of a vertex~$v$ in~$G$.
Start by letting~$G'$ be a copy of~$G$.  Then replace every edge $e=\{u,v\}$ by a path $u, u'_e, x_e, v'_e, v'$, by adding three new vertices $u'_e$, $x_e$, $v'_e$, and four new edges $\{u,u'_e\}$, $\{u'_e,x_e\}$, $\{x_e,v'_e\}$, $\{v'_e,v\}$.
We call the vertices $x_e$ \textit{connecting vertices}, and the edges $\{u'_e,x_e\}$ (and also $\{x_e,v'_e\}$) \textit{connecting edges,} and let $F_u := \{u'_e,x_e \mid e \in \delta(v)\}$.

Now, for each original vertex~$v$, do the following: replace~$v$ by $d(v)-2$ new vertices $v''_1,dots,v''_{d(v)-2}$.  Add $(d(v)-2)\cdot d(v)$ edges between the $v''_i$ and the $v'_e$, for every $i$ and every $e\in\delta(v)$.  Finally, choose two edges $e,f\in\delta(v)$ arbitrarily, and add an edge $g_v := \{v'_e, v'_f\}$, which we call the \textit{special edge.}

In this way, $G'$ contains pairwise disjoint ``gadgets'' ($\hat=$ induced subgraphs) $W_v$, $v\in V(G)$, each with $d(v)-2 + d(v)$ vertices and $(d(v)-2)\cdot d(v)$ edges.  If $uv\in E(G)$, then the gadgets $W_u$ and $W_v$ are joined to the connecting vertex $x_{uv}$ by $d(u)$, or $d(v)$, respectively, edges.   Cf.~Fig.~\ref{fig:two-gadgets}.  With $n := \abs{V(G)}$ and $m := \abs{E(G)}$, the resulting graph~$G'$ has
\begin{align*}
  m + \sum_{v\in V(G)} \bigl( d(v)-2 + d(v) \bigr)          &= 5m - 2n \text{ vertices, and} \\
  \sum_{v\in V(G)} \bigl( (d(v)-2)d(v) + 1 + d(v) \bigr) &\le \frac{4m^2}{n} + n - 2m \text{ edges.}
\end{align*}

\begin{figure}[ht]
  \begin{center}
  
 \includegraphics[height=60mm, width=100mm]{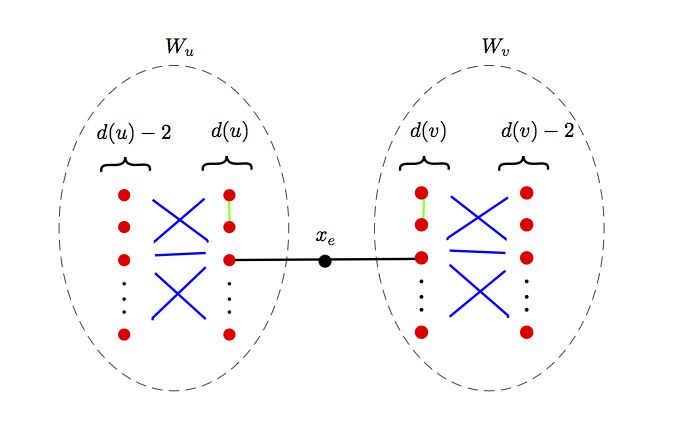}
  
 
  \end{center}
  \caption{Two ``gadgets'' $W_{u}$ and $W_{v}$ in the graph $G'$}\label{fig:two-gadgets}
\end{figure}

The following fact is crucial in the consturction.

\begin{lemma}\label{lem:matchings_in_Gprime}
  Let~$M$ be a maximal matching in~$G'$.
  For each $v\in V(G)$, there exists a matching $N_v$ which differs from~$M$ only on $E(W_v)$, and which satisfies either $N_v = M$ or $\abs{N_v\cap E(W_v)} = \abs{M\cap E(W_v)} +1$, and for which the following holds:
  With $k := \abs{M \cap F_v}$, we have
  \begin{equation}\label{eq:max_match_eq}
    \abs{ N_v\cap \bigl( E(W_v)\cup F_v \bigr) } =
    \begin{cases}
      d(v)-1, &\text{ if $0\le k \le 1$}\\
      d(v),   &\text{ if $k\ge 2$.}\\
    \end{cases}
  \end{equation}
\end{lemma}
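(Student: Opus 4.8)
The plan is to reduce the statement to a purely local computation inside the single gadget $W_v$, exploiting that $W_v$ meets the rest of $G'$ only through the connecting edges $F_v$. Write $d := d(v)$, let $I := \{v''_1,\dots,v''_{d-2}\}$ be the inner vertices and $P := \{v'_e : e\in\delta(v)\}$ the $d$ ports, so that the edges of $W_v$ are the complete bipartite graph between $I$ and $P$ together with the special edge $g_v$. The two structural facts I would isolate first are that every inner vertex has all its neighbours inside $W_v$, and that every port $v'_e$ has exactly one neighbour outside $W_v$, namely $x_e$ along its connecting edge in $F_v$. Consequently any matching that agrees with $M$ off $E(W_v)$ retains the same $F_v$-edges, so inside the gadget it can use only the $d-k$ \emph{free} ports, i.e.\ those not covered by $M\cap F_v$. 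I therefore study matchings of the induced subgraph $W_v^{\circ}$ on $I$ together with the free ports.

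Next I would compute $\mu_k$, the maximum size of a matching of $W_v^{\circ}$, and verify it does not depend on \emph{which} $k$ ports are blocked. Since every edge of $W_v$ covers at least one port, a matching of $W_v^{\circ}$ has at most $d-k$ edges; for $k=0$ the $2d-2$ vertices cap it at $d-1$. Exhibiting explicit matchings then gives equality: for $k=0$ the perfect matching pairing $d-2$ inner vertices with ports and placing $g_v$ on the two remaining ports; for $k\ge 1$ pairing inner vertices with free ports (using $g_v$ only when it helps, which it never does). This yields $\mu_0=d-1$, $\mu_1=d-2$, and $\mu_k=d-k$ for $k\ge 2$; in particular $\mu_k+k$ equals $d-1$ for $k\le 1$ and $d$ for $k\ge 2$, matching the right-hand side of \eqref{eq:max_match_eq}.

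The heart of the argument is to show that maximality of $M$ forces $\abs{M\cap E(W_v)}\in\{\mu_k-1,\mu_k\}$, i.e.\ that $M$ is off the local optimum by at most one edge. The upper bound is immediate, since $M\cap E(W_v)$ is a matching of $W_v^{\circ}$. For the lower bound I would use that a maximal matching leaves no edge with both ends free: as $I$–$P$ is complete bipartite, one cannot simultaneously have an unmatched inner vertex and an unmatched free port, so either all inner vertices or all free ports are matched inside $W_v$. Writing $b$ for the number of complete-bipartite edges of $M$ inside $W_v$ and $s\in\{0,1\}$ for whether $g_v\in M$, so that $\abs{M\cap E(W_v)}=b+s$, a short count settles each case: if every free port is matched then $b+2s=d-k$, whence $\abs{M\cap E(W_v)}=d-k-s\ge\mu_k-1$; if every inner vertex is matched then $b=d-2$, whence $\abs{M\cap E(W_v)}=d-2+s\ge\mu_k-1$. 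The delicate points, and the step I expect to be the main obstacle, are bookkeeping the feasibility constraints $0\le b\le d-2$ and $b+2s\le d-k$ together with whether the two endpoints of $g_v$ are among the blocked ports; these constraints are exactly what rule out a deficiency larger than one in every regime of $k$.

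Finally I would construct $N_v$. If $\abs{M\cap E(W_v)}=\mu_k$, set $N_v:=M$. Otherwise replace $M\cap E(W_v)$ by a maximum matching $M^\ast$ of $W_v^{\circ}$, setting $N_v:=(M\setminus E(W_v))\cup M^\ast$. Because $M^\ast$ uses only free ports and only edges of $E(W_v)$, it conflicts neither with the retained $F_v$-edges of $M$ nor with any edge at the connecting vertices $x_e$; hence $N_v$ is a matching differing from $M$ only on $E(W_v)$, with $\abs{N_v\cap E(W_v)}=\mu_k=\abs{M\cap E(W_v)}+1$. Since $\abs{N_v\cap F_v}=\abs{M\cap F_v}=k$ is unchanged, we obtain $\abs{N_v\cap(E(W_v)\cup F_v)}=\mu_k+k$, which is precisely \eqref{eq:max_match_eq}.
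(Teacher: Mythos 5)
Your proof is correct, and at its core it uses the same ingredients as the paper's: a purely local analysis of the gadget $W_v$, with maximality of $M$ entering only through the observation that a complete-bipartite edge cannot have both ends exposed. The organization, however, is different. The paper defines $N_v$ by two explicit swap rules --- if $k=0$ and $g_v\notin M$, install the perfect matching consisting of $g_v$ plus a bipartite perfect matching; if $k\ge 2$ and $g_v\in M$, trade $g_v$ for two bipartite edges joining exposed inner vertices to the endpoints of $g_v$; otherwise $N_v:=M$ --- and only afterwards verifies \eqref{eq:max_match_eq} case by case, which is where maximality is (somewhat implicitly) used. You instead frontload the quantitative content into two clean statements: $\mu_k+k$ equals the right-hand side of \eqref{eq:max_match_eq}, and $\mu_k-1\le\abs{M\cap E(W_v)}\le\mu_k$; then $N_v$ may be taken to be $M$ or an arbitrary maximum matching of the free subgadget $W_v^{\circ}$, and the lemma's ``$N_v=M$ or $+1$'' clause is automatic. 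Your version buys transparency --- in particular it bypasses the paper's garbled $k=1$ bookkeeping (``leaves either one or one of the vertices \dots unmatched'') --- at the price of proving the deficiency bound explicitly, which is exactly the $b$-and-$s$ case analysis you flag as the delicate step, and which you carry out correctly. One point to tighten: as written, your upper bounds only give $\mu_1\le d-1$ (the port count), yet you claim $\mu_1=d-2$; this value is essential, since otherwise the $k=1$ case of \eqref{eq:max_match_eq} would come out as $d$ rather than $d-1$. The fix is the same vertex-count cap you already invoke for $k=0$: for $k=1$ the subgadget $W_v^{\circ}$ has $2d-3$ vertices, hence any matching in it has at most $d-2$ edges.
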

\begin{proof}
  Let~$M$ be such a maximal matching.
  If $k=0$ and the special edge~$g_v$ is not in~$M$, then, to obtain~$N_v$, we replace $M\cap E(W_v)$ by the edges of a perfect matching of~$W_v$, which consists of $g_v$ plus a perfect bipartite matching between the $v''_i$ and the $v'_e$.  This increases the number of edges in the matching by~1.
  If $k\ge 2$ and $g_v \in M$, then at least two of the vertices $v''_i$, $i=1,\dots,d(v)-2$ are exposed.  To obtain~$N_v$, we delete~$g_v$ from $M$ and add two edges from exposed vertices in $v''_i$, $i=1,\dots,d(v)-2$, to the end-vertices of~$g_v$, thus increasing the number $W_v$-edges in the matching by~1.

  In all other cases, we leave~$M$ unchanged: $N_v := M$.

  The equations~\eqref{eq:max_match_eq} can now be easily derived.
  If $k=0$, then $N_v\cap E(W_v)$ is a perfect matching in~$W_v$, of size $d(v)-1$.
  If $k=1$, then any maximal matching leaves either one or one of the vertices $v''_i$, $i=1,\dots,d(v)-2$, unmatched.  If one is left unmatched, then the matching must contain the special edge~$g_v$, so $\abs{M\cap E(W_v)}=d(v)-2$, implying~\eqref{eq:max_match_eq}.
  If $k\ge 2$, then taking into account that $g_v\not\in N_v$, equation~\eqref{eq:max_match_eq} readily follows.
\end{proof}


We can now prove that solving the maximum (cardinality) matching problem on~$G'$ is equivalent to solving \MinLight{1} on~$G$.

\begin{lemma}\label{lem:main}
  If~$G$ has no vertices of degree~1, then \MinLight{1} on~$G$ can be solved by computing a maximum matching in~$G'$.
\end{lemma}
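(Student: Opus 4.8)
The plan is to exhibit a value-preserving correspondence between orientations of $G$ and matchings of $G'$, with the bookkeeping organized by Lemma~\ref{lem:matchings_in_Gprime}. The starting observation is that the edge set of $G'$ is partitioned by the sets $E(W_v)\cup F_v$, $v\in V(G)$: every gadget edge lies in a unique $E(W_v)$, and every connecting edge $\{v'_e,x_e\}$ lies in a unique $F_v$. Hence for any matching $M$ in $G'$,
\begin{equation*}
  \abs{M} = \sum_{v\in V(G)} \Babs{M\cap\bigl(E(W_v)\cup F_v\bigr)},
\end{equation*}
and, writing $k_v := \abs{M\cap F_v}$, Lemma~\ref{lem:matchings_in_Gprime} evaluates each summand of a \emph{maximum} $M$ as $d(v)-1$ when $k_v\le 1$ and as $d(v)$ when $k_v\ge 2$. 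Summing and using $\sum_v d(v)=2m$ gives $\abs{M}=2m-\babs{\{v:k_v\le 1\}}$ for a maximum matching.

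The bridge to orientations is the reading of $k_v$ as an out-degree. Each connecting vertex $x_e$, $e=\{u,v\}$, has degree~$2$ and can be matched to $u'_e$ or to $v'_e$; declaring that $e$ is oriented \emph{out of}~$v$ precisely when $\{v'_e,x_e\}\in M$ makes $k_v$ the out-degree of~$v$. The claim to prove is therefore that the optimum of \MinLight{1} equals $2m-\nu(G')$, where $\nu(G')$ denotes the maximum matching size, and that a maximizing matching yields a minimizing orientation; I would establish this by two matching inequalities, writing $\ell(\Lambda)$ for the number of $1$-light vertices.

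For $\min_\Lambda \ell(\Lambda)\ge 2m-\nu(G')$, I start from an arbitrary orientation $\Lambda$, put $\{v'_e,x_e\}$ into $M$ exactly for the edges $e$ oriented out of $v$, so that $k_v=d^+_\Lambda(v)$, and then complete each gadget optimally as in the proof of Lemma~\ref{lem:matchings_in_Gprime}; here one checks that the attainable per-gadget value depends only on the count $k_v$, not on which outer vertices $v'_e$ were consumed from outside. This yields a matching of size $2m-\ell(\Lambda)$, so $\nu(G')\ge 2m-\ell(\Lambda)$ for every $\Lambda$. For the reverse inequality I take a maximum matching $M$, orient each $e$ out of whichever of $u,v$ its connecting vertex $x_e$ is matched to (orienting any remaining edges arbitrarily), and observe that $d^+_\Lambda(v)\ge k_v$; thus every $v$ with $k_v\ge 2$ is heavy, the $1$-light vertices are contained in $\{v:k_v\le 1\}$, and $\ell(\Lambda)\le 2m-\abs{M}=2m-\nu(G')$. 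The two inequalities give equality, and this second construction is precisely the recovery of an optimal orientation from a maximum matching.

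The step I expect to require the most care is the reverse direction, because a maximum matching need not match every connecting vertex $x_e$ and so does not literally encode a full orientation. The resolution is the asymmetry just exploited: it is harmless to orient the ``leftover'' edges arbitrarily, since the bound $d^+_\Lambda(v)\ge k_v$ needs only $k_v\ge 2$ to force $v$ heavy, and that lower bound is insensitive to how the unmatched $x_e$ are oriented. A secondary point worth verifying is the gadget computation in the forward construction—that for each fixed $k_v$ the special edge $g_v$ can always be used when $k_v\le 1$ and is always dispensable when $k_v\ge 2$, so that the per-gadget value is realizable for every choice of externally matched outer vertices.
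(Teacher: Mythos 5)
Your proof is correct and follows essentially the same route as the paper's: the same two constructions (orientation $\to$ matching via per-gadget completion with $k_v = d^+_\Lambda(v)$, and maximum matching $\to$ orientation with $d^+_\Lambda(v) \ge k_v$), with Lemma~\ref{lem:matchings_in_Gprime} used exactly as in the paper to evaluate the per-gadget contribution of a maximum matching, followed by the same sandwich of inequalities. Your explicit partition of $E(G')$ into the sets $E(W_v)\cup F_v$ and the summation $\sum_v d(v)=2m$ merely make visible the bookkeeping the paper leaves implicit.
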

\begin{proof}
  Firstly, consider an orientation $\Lambda$ of~$G$.  We will construct a matching~$M = M(\Lambda)$ in~$G'$ with the property that, for all $v\in V(G)$,
  \begin{subequations}\label{eq:matching-orientation-relation}
    \begin{equation}
      d(v) - M \cap \bigl( E(W_v)\cup F_v \bigr) =
      \begin{cases}
        1, &\text{ if $d^+_\Lambda(v) \le 1$, and}\\
        0, &\text{ if $d^+_\Lambda(v) \ge 2$,}
      \end{cases}
    \end{equation}
    and so
    \begin{equation}
      \abs{   \bigl\{ v\in V(G) \bigm| d^+_\Lambda(v) \le 1 \bigr\}    } = 2m - \abs{M}.
    \end{equation}
  \end{subequations}
  For every directed edge $(u,v)$ in $\Lambda(G)$, choose the edge $\{u'_e,x_e\}$ to be in~$M$.  This means that, for every~$v\in V(G)$, we have
  \begin{equation}\label{eq:match-orient-local}\tag{$*$}
    \abs{M\cap F_v} = d^+_\Lambda(v).
  \end{equation}
  Then extend~$M$ arbitrarily to a maximal matching by adding edges from the $E(W_v)$, $v\in V(G)$.  Note that~$M$ is unchanged on the sets~$F_v$, $v\in V(G)$, so that~\eqref{eq:match-orient-local} still holds.  Finally, for each $v\in V(G)$, apply Lemma~\ref{lem:matchings_in_Gprime}, and replace the edges in ~$M\cap E(W_v)$, by the edges of $N_v \cap E(W_v)$.  The result is a matching satisfying~\eqref{eq:matching-orientation-relation}.


  Secondly, let~$M$ be a maximum matching in $G'$.  We will construct an orientation $\Lambda =\Lambda(M)$ of~$G$ satisfying~\eqref{eq:matching-orientation-relation}.
  For each $\{u,v\} \in E(G)$, if $\{u'_e,x_e\} \in M$, let $\Lambda(\{u,v\}) := (u,v)$; if $\{v'_e,x_e\} \in M$, let $\Lambda(\{u,v\}) := (v,u)$.  If the vertex~$x_e$ is $M$-exposed, then chose one of $(u,v)$, $(v,u)$ arbitrarily for $\Lambda(\{u,v\})$.

  In view of Lemma~\ref{lem:matchings_in_Gprime}, $M$ must coincide with each of the $N_v$, and hence the equations~\eqref{eq:max_match_eq} hold.  But, by the construction of~$\Lambda$, for each $v\in V(G)$,
  \begin{equation}\label{eq:match-orient-local}\tag{$*$}
    \abs{M\cap F_v} \le d^+_\Lambda(v).
  \end{equation}
  Hence, we conclude that
  \begin{equation*}
    \Babs{   \bigl\{ v\in V(G) \bigm| d^+_\Lambda(v) \le 1 \bigr\}   }
    \le
    \Babs{   \bigl\{ v\in V(G) \bigm| \sabs{M\cap F_v} \le 2  \bigr\}    }
    =
    2m - \abs{M(\Lambda)}.
  \end{equation*}

  We conclude.  Denoting by $\pi$ the smallest number of light vertices in any orientation of~$G$, and by $\mu$ the largest cardinality of a matching in~$G'$, we have
  \begin{equation*}
    \pi
    \le
    \Babs{   \bigl\{ v\in V(G) \bigm| d^+_{\Lambda(M)}(v) \le 1 \bigr\}   }
    =
    2m - \mu
    \le
    2m - \abs{M(\Lambda)}
    \le
    \pi,
  \end{equation*}
  which concludes the proof of the lemma.
\end{proof}

We can now prove Theorem~\ref{thm:main-unweighted}.

\begin{proof}[Proof of Theorem~\ref{thm:main-unweighted}.]
  To get rid of vertices of degree~1 in the input graph, for each such vertex~$v$, add three more vertices $v_1,v_2$, and four edges $\{v,v_1\}$, $\{v_1,v_2\}$, $\{v_2,v_3\}$, $\{v_3,v\}$.  In other words, we replace each degree-1 vertex by a 4-cycle.  A 4-cycle can have~2 heavy vertices, opposite each other, and the other~2 vertices will be light; the edge leaving the cycle will not change that.  From this, it can be readily verified that \MinLight{1} on the original graph is equivalent to \MinLight{1} on the modified graph.  Lemma~\ref{lem:main} now yields the result.
\end{proof}

\subsection*{The weighted case}
The weighted case differs only in technical aspects.

\begin{proof}[Sketch of the proof of Theorem~\ref{thm:main-weighted}.]
  First of all, note that the degree-1 vertices can be taken care of in just the same way as in the non-weighted case: just give the new vertices a cost of~0.  Then, walking through the proof of Lemma~\ref{lem:main}, we see that the argument is still valid for weighted matchings and costs punishing the light vertices. Indeed, that's the reason why we phrased Lemma~\ref{lem:matchings_in_Gprime} in the way we did: if~$c_v$ is the cost incurred if vertex~$v$ is light, give each edge in $E(W_v)\cup F_v \subset E(G')$ a weight of $c_v$. Then, as in the proof of Lemma~\ref{lem:matchings_in_Gprime}, denoting by $\pi$ the cost incurred by the light vertices in any orientation of~$G$, and by $\mu$ the largest weight of a matching in~$G'$, and with $Q := \sum_{e=\{u,v\}\in E(G)}(c_u+c_v)$, it's easy to show that
  \begin{equation*}
    \pi
    \le
    \Babs{   \bigl\{ v\in V(G) \bigm| d^+_{\Lambda(M)}(v) \le 1 \bigr\}   }
    =
    Q - \mu
    \le
    Q - \abs{M(\Lambda)}
    \le
    \pi,
  \end{equation*}
  which concludes the proof of the theorem.
\end{proof}



\section{Conclusion}\label{sec:concl}
Seeing as weighted \MinLight{1} can be solved in polynomial time by matching techniques for nonnegative weights, it is natural to ask for a description by linear inequalities of the polyhedron $P_G \subset \RR^{V(G)}$ defined by the problem: $P_G$ is the dominant (see~\cite{SchrijverBk03} for details) of the convex hull of the points $x(\Lambda) \in \RR^{V(G)}$, which have $x(\Lambda)_v = 1$ if $v$ is $\Lambda$-poor, and $x(\Lambda)_v = 0$ otherwise.

Kyncl et al.~\cite{KyncLidickyVyskocil09} study the so-called minimum irreversible $k$-conversion problem, which is closely related to \MinLight{*}.  In fact, the only difference between  \MinLight{k} and
Minimum Irreversible $(k+1)$-Conversion is that the latter requires the orientations to be acyclic.  Kyncl et al.\ prove that Minimum Irreversible $2$-conversion is NP-hard, even for graphs of maximum degree~4, but for 3-regular graphs, it is equivalent to finding a vertex feedback set (which can be done in poly-time~\cite{UenoKG88}).

Since the complexity of Minimum Irreversible $2$-Conversion is open for subcubic graphs, in the light of our result, we conjecture that there might be a matching-based algorithm for that problem.

\medskip\noindent
I would like to thank Dirk Oliver Theis for his support, inspiring discussions and useful comments.


\bibliographystyle{amsplain}
\bibliography{orientation}
\end{document}